\newcommand{\ket}[1]{\left|#1\right\rangle}
\newcommand{\bra}[1]{\left\langle#1\right|}
\newtheorem{theorem}{Theorem}
\newtheorem{lemma}{Lemma}
\newenvironment{proof}[1][Proof]{\begin{trivlist}
\item[\hskip \labelsep {\bfseries #1}]}{\end{trivlist}}
\def \qed{$\blacksquare$}
\DeclareMathOperator{\Per}{Per}
\begin{document}

\title{Linear optics only allows every possible quantum operation \\ for one photon or one port}
\author{Julio Jos\'e Moyano-Fern\'andez}
\email{moyano@uji.es}
\affiliation{Universitat Jaume I, Departamento de Matem\'aticas and IMAC-Institut Universitari de Matem\`atiques i Aplicacions de Castell\'o, 12071 Castell\'on de la Plana, Spain.}
\author{Juan Carlos Garcia-Escartin}
\email{juagar@tel.uva.es}
\affiliation{Dpto. de Teor\'ia de la Se\~{n}al y Comunicaciones e Ingenier\'ia Telem\'atica. ETSI de Telecomunicaci\'on. Universidad de Valladolid. Campus Miguel Delibes. Paseo Bel\'en 15. 47011 Valladolid. Spain.}
\date{\today}
\begin{abstract}
We study the evolution of the quantum state of $n$ photons in $m$ different modes when they go through a lossless linear optical system. We show that there are quantum evolution operators $U$ that cannot be built with linear optics alone unless the number of photons or the number of modes is equal to one. The evolution for single photons can be controlled with the known realization of any unitary proved by Reck, Zeilinger, Bernstein and Bertani. The evolution for a single mode corresponds to the trivial evolution in a phase shifter. We analyze these two cases and prove that any other combination of the number of photons and modes produces a Hilbert state too large for the linear optics system to give any desired evolution.
\end{abstract}

\pacs{ 42.50.-p, 42.79.-e, 02.10.Ox}
\maketitle
\section{Quantum optics in photon-preserving linear systems}
\label{intro}
There are many optical elements that can affect the quantum state of light. Elements that preserve the number of photons are particularly interesting in quantum optics and in applications to quantum information \cite{CST89,Leo03,KMN07}. Linear, lossless, passive systems have received a great deal of attention since the demonstration that, combined with measurement, they can be used to build a universal quantum computer \cite{KLM01}. Recently there has been a revived interest kindled by the result that the output statistics of linear optics multiports cannot be accurately predicted in a classical computer efficiently unless several well-founded computational complexity hypothesis are false \cite{AA11}.

In this paper, we study the behaviour of optical systems that act on $n$ photons in $m$ different modes. We call $m\times m$ multiports to the optical systems of interest. The evolution of the state of the photons can be characterized from the scattering matrices $S$ used to describe $m$-ports in classical electromagnetism. We stick to the port denomination for the intuitive picture it gives, but the photons can really be in different orthogonal modes. The key is that two photons in different modes are perfectly distinguishable and do not interfere. The simplest example is a system with photons travelling in different paths, but we can also imagine photons in orthogonal polarization states or which have orthogonal orbital angular momentum states. 

The inputs to our system are a combination of states with $n_i$ photons in a mode with index $i$, denoted by $\ket{n_i}_{i}$. For a system with a total number of photons $n$, all the possible input states can be described as a linear combination of states
\begin{equation}
\label{basisstates}
\ket{\Psi}=\ket{n_1}_{1}\ket{n_2}_{2}\ldots\ket{n_m}_{m}
\end{equation}
with $n_1+n_2+\ldots+n_m=n$. Linear optics multiports present at their output a linear combination of states of the same form. 

The evolution of a photonic quantum state in our system can be specified from a unitary matrix $U$ so that $\ket{\Psi_{\text{out}}}=U\ket{\Psi_{\text{in}}}$. The classical scattering matrix $S$ is enough to characterize the evolution of any number of photons entering the multiport. Both $S$ and $U$ must be unitary matrices as they describe systems that conserve energy and the total probability, respectively. 

The step from $S$ to $U$ depends on the number of photons. If we take the basis composed of the number states of Eq.~($\ref{basisstates}$), the element of $U$ that describes the transition from $\ket{\Psi_{\text{in}}}=\ket{n_1}_{1}\ket{n_2}_{2}\ldots\ket{n_m}_{m}$ to $\ket{\Psi_{\text{out}}}=\ket{n_1'}_{1}\ket{n_2'}_{2}\ldots\ket{n_m'}_{m}$ can be determined from $\bra{n_1'}_{1}\bra{n_2'}_{2}\ldots\bra{n_m'}_{m}U\ket{n_1}_{1}\ket{n_2}_{2}\ldots\ket{n_m}_{m}$, which has a value
\begin{equation}
\label{pereq}
\frac{\Per(S_{\text{in},\text{out}})}{\sqrt{n_1!'\cdot n_2'! \cdots n_m'!\cdot n_1!\cdot n_2! \cdots n_m!}}.
\end{equation} 
In Eq.~($\ref{pereq}$), $\Per(S_{\text{in},\text{out}})$ is the permanent of a matrix $S_{\text{in},\text{out}}$ with elements $S_{i,j}$ from $S$ such that each row index $i$ appears exactly $n_i'$ times and each column index $j$ is repeated exactly $n_j$ times \cite{Sch04,AA11}.

Alternatively, we can write our number states from their creation operators so that $\ket{n_i}_i=\frac{\hat{a}_i^{\dag}}{\sqrt{n_i!}}\ket{0}_i$ and see how the operators transform. For a linear optics multiport, we know \cite{SGL04} the creation operator $\hat{a}_i^{\dag}$ evolves into
\begin{equation}
\label{operatorev}
\sum_{j=1}^{m} S_{ji}\hat{a}_j^{\dag}.
\end{equation}
 
The size of the scattering matrix is a function of the number of inputs and outputs of the optical system. $S$ is an $m\times m$ matrix, whereas $U$ is an $M\times M$ matrix, with $M$ the size of the Hilbert space that contains all the possible configurations of $n$ photons divided into $m$ modes. These different states form a complete basis of the state space and their number is equivalent to the number of ways of placing $n$ indistinct balls in $m$ different boxes, which is the combinatorial number
\begin{equation}
\label{sizeH}
M=\binom{m+n-1}{n}=\frac{(m+n-1)!}{(m-1)!~n!}.
\end{equation}

We can generate all the possible states recursively if we assign a photon number $i$ from $0$ to $n$ to the first mode and then generate all the possible states for the $n-i$ remaining photons in the rest of the modes. By the time we arrive to the last mode the assignment is trivial and we can repeat the procedure until we have a complete list.

\section{Universal quantum transformations}
\label{universal}
We say we have universality if, for our number of photons $n$ and the number of modes $m$ of our system, we can generate any desired quantum evolution $U$ in the state space of all the possible distributions of the $n$ photons in the $m$ modes. 

In this paper, we show there are limitations to the quantum transformations $U$ we can create from a linear optics multiport. While we can implement any desired unitary scattering matrix $S$ using only beam splitters and phase shifters \cite{RZBB94}, a tailored $S$ can only produce any arbitrary $U$ in a limited set of cases.  
 
This is a problem different from finding a universal set of gates for quantum computation. In most linear optics implementations of quantum computing we restrict ourselves to only a subset of all the possible quantum states and there is some kind of postselection. 

\section{Degrees of freedom and universality}
\label{main}
The main result of the paper is a proof that there is a necessary condition for universality which is only satisfied in a limited number of cases for which there are explicit ways to describe how we can generate any desired $U$.

The basic argument is that the degrees of freedom we have when we build the multiport must be at least equal to the degrees of freedom in the photonic Hilbert space. Otherwise, there will be transformations that are impossible to perform.

\begin{lemma}
\label{mainresult}
A linear optics multiport with $m$ inputs cannot be used to give all the possible quantum evolutions in the state space of $n$ photons in $m$ distinct modes unless $m \geq M$, where $M$ is the dimension of the Hilbert space of the photonic states.
\end{lemma}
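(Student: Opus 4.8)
The plan is to exploit the observation from Section~\ref{intro} that the passage from a scattering matrix $S$ to the induced evolution $U$ on the photonic state space is a smooth map of Lie groups, and then to compare the dimensions of its domain and its target. Concretely, $S\mapsto U$ defines a map $\varphi\colon U(m)\to U(M)$: by Eq.~(\ref{pereq}) every entry of $U$ equals, up to a fixed normalisation factor, the permanent of a submatrix built from entries of $S$, so each entry of $U$ is a polynomial in the entries of $S$ and $\varphi$ is a polynomial, hence $C^{\infty}$, map defined on the whole manifold $U(m)$. Cascading two multiports with scattering matrices $S_{1}$ and then $S_{2}$ produces the scattering matrix $S_{2}S_{1}$, immediately from applying the substitution rule~(\ref{operatorev}) twice, and hence the composed evolution $\varphi(S_{2})\varphi(S_{1})$; thus $\varphi$ is also a group homomorphism, which is convenient though not essential below.

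Next I would count degrees of freedom. The group $U(m)$ is a compact real Lie group of real dimension $m^{2}$: an $m\times m$ complex matrix carries $2m^{2}$ real parameters, cut down by the $m^{2}$ independent real equations contained in $S^{\dagger}S=I$. In the same way $\dim U(M)=M^{2}$. Universality as defined in Section~\ref{universal} amounts exactly to the requirement that $\varphi$ be surjective onto $U(M)$, since every admissible evolution $U$ of the $n$ photons must be realisable by some multiport, that is, by some $S$.

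Finally I would invoke a dimension obstruction. Assume $m<M$, so $\dim U(m)=m^{2}<M^{2}=\dim U(M)$. Then the differential $d\varphi$ has rank at most $m^{2}<M^{2}$ at every point, so every point of $U(m)$ is a critical point of $\varphi$, and Sard's theorem gives that the image $\varphi\!\left(U(m)\right)$ — which here is precisely the set of critical values — has Haar measure zero in $U(M)$, in particular is a proper subset. Equivalently, the image of the homomorphism $\varphi$ is a Lie subgroup of dimension at most $\dim U(m)<\dim U(M)$, hence proper. Either way, some $U\in U(M)$ is not realisable, so universality forces $m\ge M$, as claimed.

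I expect two points to require care. First, one should confirm that the pertinent target is a group of dimension of order $M^{2}$: even if the physical irrelevance of global phases makes the relevant group $SU(M)$ or the projective group $PU(M)$ rather than $U(M)$, its dimension is still at least $M^{2}-1$, and $m^{2}\le(M-1)^{2}<M^{2}-1$ whenever $M\ge 2$ (the case $M=1$ being trivial), so the conclusion $m\ge M$ is unaffected. Second, the argument rests entirely on $\varphi$ being genuinely smooth on all of $U(m)$, and this is exactly where the explicit permanent formula~(\ref{pereq}) is needed. The rest is a routine comparison of dimensions.
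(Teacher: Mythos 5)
Your proposal is correct and follows essentially the same route as the paper's own proof: both define the homomorphism $\varphi$ from the $m\times m$ scattering matrices to the $M\times M$ evolution operators via the permanent formula of Eq.~(\ref{pereq}) and conclude that surjectivity is impossible when the real dimension $m^{2}$ of the domain is smaller than the dimension $M^{2}$ of the target. Your version merely supplies the details the paper leaves implicit (smoothness of $\varphi$, Sard's theorem or the Lie-subgroup dimension bound, and the global-phase caveat), all of which are sound.
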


\begin{proof}
The unitary group $U(m^2)$ contains the $m\times m$ matrices $S$ that describe the linear optics system and the unitary group $U(M^2)$ contains the $M\times M$ matrices $U$ that describe the quantum evolution of the photons' state. Using the expression of Eq.~($\ref{pereq}$) we can define an homomorphism $\varphi: S \to U $ which maps $U(m^2)$ to $U(M^2)$ \cite{AA11}. We can only reach all the matrices in $U(M^2)$ if $\varphi$ is surjective, which for our unitary groups is equivalent to ask for $\varphi$ to be an epimorphism. The homomorphism can only be surjective if the dimension of the domain of $\varphi$ is at least as large as its codomain. In our problem, the condition is $m^2 \geq M^2$, which, for the ranges we are interested in, reduces to the necessary condition for universality

\begin{equation}
\label{neccond}
m\geq\binom{m+n-1}{n}.
\end{equation}

\qed
\end{proof}

The intuition behind this result is that we have only a limited number of degrees of freedom when we build the linear optics system. If the target state space is too big, we cannot reach all the possible matrices $U$. 

In the following sections, we show that in all the cases where necessary condition is met ($n=0$, $n=1$ and $m=1$), there is also an explicit way to find any desired unitary. For $n> 1$ and $m > 1$ we prove it is impossible to implement all possible unitary matrices $U$ using linear optics alone.

\subsection{The vacuum state is always taken to the vacuum}
\label{vacuum}
The first trivial result is that linear optics preserves the vacuum state with zero photons. This is obvious as a passive linear optics multiport cannot create photons, but can also be deduced from the necessary condition of Eq.~($\ref{neccond}$). Our Hilbert space has a dimension 
\begin{equation}
M=\binom{m+n-1}{n}=\binom{m-1}{0}=1
\end{equation}
and $m\geq 1$ for any linear optics system, which will have, at least, one input. There can be many unused degrees of freedom. With no photons the exact configuration of the linear optics multiport is irrelevant and we can choose different scattering matrices. 

\subsection{Systems with one port are equivalent to a phase shifter and trivially give universality for any number of photons}
\label{phaseshifter}
When $m=1$ we have a similar situation. For any number of photons $n$ 
\begin{equation}
M=\binom{m+n-1}{n}=\binom{n}{n}=1
\end{equation}
and the necessary condition of Eq.~($\ref{neccond}$) is met with $m=1=M$. The interpretation is also clear. If we have only one mode, the only allowed physical operation is a phase shift which is equivalent to a $1\times1$ unitary matrix $S=(e^{i\phi})$ whose only element is a root of unity. The linear optics system can only be a phase shifter. The evolution for $n$ photons is then a phase term $e^{in\phi}$. We can use our degree of freedom $\phi$ to give any output phase shift we want and we have universality.

However, in a quantum state we cannot observe a global phase shift. Phase can only be determined when compared to a reference, like in interference between states. This is similar to the definition of voltage, where only differences of voltage have a physical meaning. The output state $e^{in\phi}\ket{n}_1$ is equivalent to the input state $\ket{n}_1$. There is no measurement that can distinguish between these two states. For a set of measurement operators $\left\{E_o\right\}$, we obtain outcome $o$ with probability $p(o)=\bra{n}_1 E_o^\dag E_o\ket{n}_1$ for $\ket{n}_1$, which is the same result we get for $\bra{n}_1e^{-i\phi}E_o^\dag E_oe^{i\phi}\ket{n}_1$. In this case, any unitary matrix $U$ is, really, equivalent to the identity matrix $I$. 

Notice that the equivalence disappears if we have a reference state. If we had one photon in a reference path, the effect of the phase shifter could be observed with a well designed measurement. But then we would be in a different case with $m>1$. 

\subsection{Linear optics multiports can give any possible quantum transformation for one input photon}
\label{singlephoton}
The next interesting case is the evolution of a single photon in an arbitrary multiport with $m\geq 1$ ports. Here 
\begin{equation}
M=\binom{m+n-1}{n}=\binom{m}{1}=m
\end{equation}
and we fulfill the necessary condition with $m=M$. One basis of the Hilbert space of the photon states is the basis of elements $\ket{i}$ where $i=1,\ldots,m$ is the index of the mode our only photon is in. State $\ket{i}$ corresponds to a column vector filled with zeros and a 1 entry in row $i$. From the definition in Eq.~($\ref{pereq}$) we can see that, for one photon, $\bra{j}U\ket{i}=S_{j,i}$. The permanent $\Per(S_{\text{in},\text{out}})$ is exactly the only element of the matrix $S_{\text{in},\text{out}}=(S_{j,i})$ and $U=S$.

We can then implement any desired unitary directly by choosing the appropriate matrix $S$. There are constructive methods to implement any unitary $S$ using only beam splitters and phase shifters \cite{RZBB94} or only one kind of beam splitter for $m\geq 3$ \cite{BA14,Saw16}.

This result cannot be used to build a scalable universal quantum computer. If we want to implement an algorithm acting on $q$ qubits, we need to use $2^q$ paths to generate all the possible states. This exponential growth prevents a generalized use of linear optics with one photon for quantum computation. 

\subsection{Linear optics alone cannot give any desired quantum transformation for more than one input photon in more than one mode}
\label{impossibility}
Apart from the limited results of the previous sections, in general, linear optics multiports cannot give any desired unitary evolution.

\begin{theorem}
\label{impresult}
A linear optics multiport with $m>1$ inputs cannot be used to give all the possible quantum evolutions in the state space of $n>1$ photons in $m$ distinct modes.
\end{theorem}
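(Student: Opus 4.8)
The plan is to deduce the theorem from the necessary condition already proved in Lemma~\ref{mainresult}: universality requires $m\geq M$ with $M=\binom{m+n-1}{n}$, i.e.\ inequality~(\ref{neccond}) must hold. So it suffices to show that (\ref{neccond}) fails whenever $m>1$ and $n>1$, that is, to establish the purely combinatorial fact
\begin{equation}
\binom{m+n-1}{n}>m\qquad\text{for all }m\geq 2,\ n\geq 2 .
\end{equation}
Granting this, the homomorphism $\varphi$ of Lemma~\ref{mainresult} has a domain of real dimension $m^2$ strictly smaller than the dimension $M^2$ of its codomain, so $\varphi$ cannot be surjective: there exist unitaries $U$ on the photonic Hilbert space that no choice of scattering matrix $S$ reproduces, which is precisely the assertion of the theorem.

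For the inequality, the first step is to observe that $\binom{m+n-1}{n}$ is nondecreasing in $n$ for every fixed $m\geq 1$, since increasing $n$ by one multiplies it by $\frac{m+n}{n+1}\geq 1$. Hence it is enough to check the bound at the smallest admissible value $n=2$, where
\begin{equation}
\binom{m+1}{2}=\frac{m(m+1)}{2}>m\iff m>1 ,
\end{equation}
which holds by hypothesis; for any larger $m$ or $n$ the gap only widens. (If one prefers to quotient out the physically irrelevant global phase and replace the unitary groups by their $SU$ counterparts, the dimension count becomes $m^2-1\geq M^2-1$, the same condition, so nothing changes.)

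I do not expect a genuinely hard step here. Lemma~\ref{mainresult} has already done the conceptual work of turning ``can we reach every $U$?'' into the inequality $m\geq M$, and what remains is the elementary bookkeeping above. The only place deserving a moment's care is the corner of the claimed region, $m=n=2$ with $M=3>2$, which is exactly where $\tfrac{m(m+1)}{2}$ first overtakes $m$; monotonicity then propagates the strict inequality across the whole range $m,n>1$ and finishes the proof.
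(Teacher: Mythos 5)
Your proposal is correct and follows essentially the same route as the paper: both reduce the theorem to the violation of the necessary condition $m\geq M$ from Lemma~\ref{mainresult} and then verify the elementary inequality $\binom{m+n-1}{n}>m$ for $m,n\geq 2$. The only (immaterial) difference is in that last verification step, where you use monotonicity in $n$ plus the base case $n=2$, while the paper writes $M=m\prod_{k=2}^{n}\left(\frac{m-1+k}{k}\right)$ and notes that each factor exceeds $1$.
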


\begin{proof}
We consider all the cases with $m>1$ and $n>1$. From Eq.~($\ref{sizeH}$)
\begin{align}
M=\binom{m+n-1}{n}=\frac{(m+n-1)\cdots (m+1)\cdot m }{n!}\nonumber\\=m\frac{(m+1)\cdots (m+n-1)}{2\cdot3\cdots n}.
\end{align}
We can write the dimension of the photons' Hilbert space as
\begin{equation}
M\!=m\!\cdot\!\frac{m+1}{2}\cdot\frac{m}{3}\cdots\frac{m+n-1}{n}=m\prod_{k=2}^{n}\left(\frac{m-1+k}{k}\right),
\end{equation}
with a product of terms $1+\frac{m-1}{k}>1$ if $m>1$. For $n>1$ and $m>1$ there is at least one such term in the product and it is immediate to prove $M>m$ which violates the necessary condition of Eq.~($\ref{neccond}$).
\qed
\end{proof}

\section{Comments and examples}
We have shown that, except for a few restricted cases, linear optical systems cannot be used to give any desired quantum evolution for $n$ photons divided into $m$ optical modes. We have given a necessary condition for universality and proved that when the condition is satisfied there are explicit constructions for any unitary evolution $U$ we require. 

It is still open how severe this restriction is. In the condition $m\geq M$, the growth of $M$ as $\binom{m+n-1}{n}$ suggests a smaller number of achievable operators $U$ for higher values of $n$ and $m$. However, there can be important limitations even for small state spaces. We can show that in a simple example with two input ports and two photons. 

The linear optics system is determined by the unitary matrix
\begin{equation}
S=\left( \begin{array}{cc}
S_{11} & S_{12}  \\
S_{21} & S_{22} \end{array} \right).
\end{equation}
From Eq.~($\ref{operatorev}$), we can see an input state $\ket{n_1}_1\ket{n_2}_2$ has an output
\begin{equation}
\frac{1}{\sqrt{n_1!n_2!}}(S_{11} \hat{a}_1^{\dag}+S_{21} \hat{a}_2^{\dag} )^{n_1}(S_{12} \hat{a}_1^{\dag}+ S_{22} \hat{a}_2^{\dag})^{n_2} \ket{0}_1\ket{0}_2.
\end{equation}
For the basis $\left\{\ket{2}_1\ket{0}_2,\ket{0}_1\ket{2}_2,\ket{1}_1\ket{1}_2 \right\}$ and defining
\begin{multline}
\ket{2}_1\ket{0}_2=\ket{20}=\left( \begin{array}{c}
1\\
0\\
0 \end{array} \right), \hspace{3ex} 
\ket{0}_1\ket{2}_2=\ket{02}=\left( \begin{array}{c}
0\\
1\\
0 \end{array} \right),\nonumber\\
\ket{1}_1\ket{1}_2=\ket{11}=\left( \begin{array}{c}
0\\
0\\
1 \end{array} \right),\hspace{10ex}
\end{multline}
the unitary matrix that gives the evolution of the photons' state is
\begin{equation}
\label{U22}
U=\left( \begin{array}{ccc}
S_{11}^2 & S_{12}^2 & \sqrt{2}S_{11}S_{12} \\
S_{21}^2 & S_{22}^2 & \sqrt{2}S_{21}S_{22} \\
\sqrt{2}S_{11}S_{21} & \sqrt{2}S_{12}S_{22} & S_{11}S_{22}+S_{12}S_{21} \end{array} \right).
\end{equation}

We can review many interesting known phenomena from this description. Take for instance the Hadamard matrix that corresponds to a balanced beam splitter
\begin{equation}
S=\frac{1}{\sqrt{2}}\left( \begin{array}{cc}
1 & 1  \\
1 & -1 \end{array} \right).
\end{equation}
For an input state $\ket{11}$, if we substitute the relevant terms in Eq.~($\ref{U22}$) and operate, we get the evolution
\begin{equation}
U\ket{11}=\frac{\ket{20}-\ket{02}}{\sqrt{2}}.
\end{equation}
This is the simplest example of quantum interference between indistinguishable photons and it is described in the famous Hong-Ou-Mandel experiment \cite{HOM87}.

We can also use this simple example to show there are forbidden operations. For instance, the evolution 
\begin{equation}
\label{antidig}
U=\left( \begin{array}{ccc}
0 & 0 & 1 \\
0 & 1 & 0 \\
1 & 0 & 0 \end{array} \right)
\end{equation}
is impossible as we cannot, among others, make $S_{21}^2=0$ and $\sqrt{2}S_{11}S_{21}=1$ at the same time.

We can go a bit further and give bounds to how close we can get to a given state when we start with a fixed input. If we use the general expression in Eq.~($\ref{U22}$), we can see the output state for an input $\ket{11}$ is
\begin{equation}
\ket{\Phi_{\text{out}}}=\left( \begin{array}{c}
\sqrt{2}S_{11}S_{12} \\
\sqrt{2}S_{21}S_{22} \\
S_{11}S_{22}+S_{12}S_{21} 
\end{array} \right).
\end{equation}
Imagine we want to obtain the output state $\ket{20}$. We know this is impossible because it would require the matrix of Eq.~($\ref{antidig}$), up to a global phase. We can, instead, search for the closest possible state, as measured from the overlap 
\begin{equation}
|\bra{20} U \ket{11}|^2=2|S_{11}|^2 |S_{12}|^2=2 |S_{11}|^2 (1-|S_{11}|^2),
\end{equation} 
where we use $S$ is unitary and therefore $|S_{11}|^2+|S_{12}|^2=1$. We would like to get $|\sqrt{2}S_{11}S_{12}|=1$, but we must settle with maximizing $2 |S_{11}|^2 (1-|S_{11}|^2)$. The entry is maximized for $|S_{11}|^2=\frac{1}{2}$ with a maximum overlap $\frac{1}{2}$, which is exactly the case in the Hong-Ou-Mandel experiment. This example shows the limitations can be severe even for values of $M$ slightly above $m$, like our example with $M=3$ and $n=2$.

We can also wonder if the results are valid outside Fock states. However, other states, like coherent states 
\begin{equation}
\ket{\alpha}=e^{-\frac{|\alpha|^2}{2}}\sum_{k=0}^{\infty}\frac{\alpha^k}{\sqrt{k!}} \ket{k} 
\end{equation}
can always be written as a linear superposition of number states. Linear optics preserves the number of photons and we can study separately the evolution for different photon numbers. For most of the terms in the superposition we cannot achieve any arbitrary evolution. Unless we only have superpositions of states for which universal evolution is possible there will be forbidden operations.

The restrictions of the achievable evolutions $U$ does not mean we cannot produce any desired output state. We can always introduce the desired state $\ket{\Psi}$ in a linear system with a scattering matrix $S$ and measure the output $\ket{\Phi}$. The inverse system, with matrix $S^{\dag}$, will produce an output $\ket{\Psi}$ for an input $\ket{\Phi}$. Trivially, if $S=I$ we can generate any output by choosing that state at the input. This only shows that arbitrary state preparation is equivalent to preparing a known state and being able to perform an arbitrary evolution. 

The interest of the presented result lies in the realization that certain states cannot be achieved from certain inputs. Determining which states can be reached for any given input state is left as an open problem that will require different methods that the ones presented here.

\begin{acknowledgments}
The first author has been partially supported by the Spanish Government Ministerio de Econom\'ia y Competi\-tividad (MINECO), grant MTM2012-36917-C03-03, and by Universitat Jaume I, grant P1-1B2015-02.
\end{acknowledgments}

\newcommand{\noopsort}[1]{} \newcommand{\printfirst}[2]{#1}
  \newcommand{\singleletter}[1]{#1} \newcommand{\switchargs}[2]{#2#1}

\end{document}